\newcommand{\E}{\mathbb{E}}
\newcommand{\indicator}[1]{\mathds{1}\{ #1 \}}
\newcommand{\A}{\mathbf{A}}
\newcommand{\bd}{\mathbf{d}}
\newcommand{\T}{\mathbf{t}}
\newcommand{\Z}{\mathbf{Z}}
\newtheorem{defn}{Definition}
\newtheorem*{defn*}{Definition}
\newtheorem{prop}{Proposition}
\newtheorem{assumption}{Assumption}
\title{Identification of homophily and preferential recruitment \\ in respondent-driven sampling}
\author{Forrest W. Crawford$^{1}$, Peter M. Aronow$^{1,2}$, Li Zeng$^{1}$, and Jianghong Li$^{3}$ \\[1em]
  \normalsize 
  1. Department of Biostatistics, Yale School of Public Health\\
   \normalsize 
 2. Department of Political Science, Yale University\\
  \normalsize 
3. Institute for Community Research 
} 
\date{\today}
\begin{document}
\maketitle


\begin{abstract}
  \noindent Respondent-driven sampling (RDS) is a link-tracing procedure for surveying hidden or hard-to-reach populations in which subjects recruit other subjects via their social network.  There is significant research interest in detecting clustering or dependence of epidemiological traits in networks, but researchers disagree about whether data from RDS studies can reveal it.
  Two distinct mechanisms account for dependence in traits of recruiters and recruitees in an RDS study: homophily, the tendency for individuals to share social ties with others exhibiting similar characteristics, and preferential recruitment, in which recruiters do not recruit uniformly at random from their available alters.  
  The different effects of network homophily and preferential recruitment in RDS studies have been a source of confusion in methodological research on RDS, and in empirical studies of the social context of health risk in hidden populations.  
In this paper, we give rigorous definitions of homophily and preferential recruitment and show that neither can be measured precisely in general RDS studies.  We derive nonparametric identification regions for homophily and preferential recruitment and show that these parameters are not point identified unless the network takes a degenerate form.  The results indicate that claims of homophily or recruitment bias measured from empirical RDS studies may not be credible.  We apply our identification results to a study involving both a network census and RDS on a population of injection drug users in Hartford, CT.  
\\[1em]
\textbf{Keywords:} 
hidden population, 
link-tracing, 
network sampling,
nonparametric bounds,
stochastic optimization,
social network
\end{abstract}



\section{Introduction}

Epidemiological research on the social context of health outcomes depends on researchers' ability to observe features of the social network connecting members of the target population.  In particular, many research projects seek to determine whether epidemiological traits (e.g. disease status or risk behaviors) cluster in the population social network.  But epidemiological studies of stigmatized or criminalized populations such as drug users, men who have sex with men, or sex workers can be challenging because potential subjects may be unwilling to participate in surveys or intervention campaigns because they fear exposure, persecution, or even prosecution.  Respondent-driven sampling (RDS) is a common procedure for recruiting members of hidden or hard-to-reach populations \citep{Broadhead1998Harnessing,Heckathorn1997Respondent}.  Starting with a set of initial participants called ``seeds'', subjects are interviewed and given a small number of coupons they use to recruit other members of the study population. Participants recruit others by giving them a coupon bearing a unique code and information about how to participate in the study.  Each subject receives a reward for being interviewed and another for every new subject they recruit.  

Most methodological research on RDS assumes the existence of a social network connecting members of the target population, where recruitments take place across edges in that network \citep{Salganik2004Sampling,Volz2008Probability,Gile2011Improved,Crawford2014Graphical,rohe2015network}.  For privacy reasons, subjects in an RDS study typically do not provide identifying information about their alters in the target population network.  Instead, researchers measure respondents' degree in the target population social network. Since RDS only reveals links between recruiter and recruitee, many edges in the network of respondents remain unobserved. The privacy protections afforded to subjects in an RDS survey may encourage participation, but unobserved edges impose limitations on what researchers can learn about the underlying network.

Since the recruitment process is network-based, the traits of recruiter and recruitee may not be independent \citep{Gile2010Respondent,Tomas2011Effect}.  Two mechanisms account for this dependence. First, \emph{homophily} is the tendency for people to exhibit social ties with others who share their traits \citep{McPherson2001Birds}.  Second, recruiters in RDS choose new recruits from among their network neighbors, who may share similar traits or behaviors.  \emph{Preferential recruitment} of a certain type of person, conditional on existing social ties, can make RDS recruitment chains appear more homogeneous, even in the absence of homophily in the network.  While homophily is a property of the target population social network, preferential recruitment is a property of the RDS recruitment process, conditional on that network.

Epidemiologists and public health researchers care about homophily and preferential recruitment in RDS studies because these forms of dependence may bias estimates of population-level quantities \citep{Gile2010Respondent,Tomas2011Effect,liu2012assessment,Rudolph2013Importance,rocha2015respondent}. \citet[][Table 1]{Gile2015Diagnostics} state that two assumptions ``required'' by the most popular estimator of the population mean \citep{Volz2008Probability} are ``homophily sufficiently weak'' and ``random referral''. Prospective remedies for these forms of dependence are different. The effects of homophily on estimators can sometimes be attenuated by choosing seeds in diverse populations.  Preferential recruitment is less easily diminished because this form of selection bias is controlled by subjects in the RDS study.  Epidemiologists are also interested in homophily as a measure of clustering of traits in the network.  The dynamics of infectious disease spread in populations may depend on the topological properties and traits of individuals in the epidemiological contact network \citep{salathe2010dynamics,volz2011effects}; since RDS is a network-based sampling method, it may reveal features of this contact network. For example, \citet{Stein2014Comparison} and \citet{Stein2014Online} treat RDS recruitments as epidemiological contacts to estimate assortative mixing and homophily in the close-contact network relevant for transmission of pathogens. \citet[][page 18]{Stein2014Comparison} suggest that from RDS data ``correlations between linked individuals can be used to improve parameterisation of mathematical models used to design optimal control'' for epidemic management.  However, positive correlation in the traits of recruiter and recruitee could indicate homophily, recruitment preference, both, or neither. 

Unfortunately, researchers do not agree on the definitions of homophily and preferential recruitment in RDS studies. \citet{white2015strengthening} observe that the term \emph{homophily} has ``inconsistent usage in the RDS community. Sometimes it is used to refer to the tendency for sample recruitments to occur between participants in the same social category and sometimes to refer to the tendency for relationships in the target population to occur between participants in the same social category''.  For example, \citet[][page 388]{Ramirez2005Networks} define homophily as ``a tendency toward in-group recruitment''.  \citet[][page 751]{Abramovitz2009Using} write that ``[d]ifferential recruitment patterns are usually the result of individuals' tendencies to associate with other individuals who are similar to them, also known as homophily''.  \citet[][page 307]{Uuskula2010Evaluating} define homophily as ``the extent to which recruiters are likely to recruit individuals similar to themselves''.  \citet[][page 2326]{Rudolph2014Evaluating} define preferential recruitment in network terms: ``[d]ifferential recruitment based on the outcome of interest may occur when (1) the outcome clusters by network or (2) network members cluster in space and the outcome is spatially clustered'', which seems to mirror the definition of homophily.  Finally, \citet{Fisher2014Stickiness} invent the term ``stickiness'', the tendency of recruitment chains to become stuck within a group of subjects with similar traits.  

\citet[][page 911]{Tomas2011Effect} argue that estimating homophily and preferential recruitment can be challenging in empirical RDS studies: ``it is not always possible to distinguish from the sample if differential recruitment exists, because its effect on the resulting sampling chain is similar to that of homophily''. However, many authors have claimed to measure homophily in the target population social network \citep{Gwadz2011Effect,Simpson2014Birds,rudolph2011subpopulations,wejnert2012estimating,Rudolph2014Evaluating}, and others have reported evidence of preferential recruitment in the RDS recruitment chain \citep{Iguchi2009Simultaneous,Yamanis2013Empirical,Young2014Spatial}.  Two software tools for analysis of RDS data produce estimates of homophily or preferential recruitment: RDSAT \citep{Volz2012Respondent} and RDS Analyst \citep{Handcock2013RDS}, but the estimators used by these programs are not documented, and their statistical properties have not been described in the peer-reviewed literature.

In this paper, we adapt ideas from the domain of partial identification \citep{Manski2003Partial} to the network setting \citep{depaula2014identification,graham2015methods}.  We compute nonparametric graph-theoretic bounds for homophily and preferential recruitment under minimal assumptions about the underlying network and recruitment process.  We first give rigorous definitions of homophily and preferential recruitment, and show that these quantities are not point identified unless the recruitment tree is identical to the underlying subgraph.  We describe a stochastic optimization algorithm for finding these bounds, and give conditions for its convergence.  To illustrate the bounds, we analyze data from a unique RDS survey of people who inject drugs (PWID) in Hartford, Connecticut in which the subgraph of respondents and their network alters is known with near certainty.  We compare the point estimates of homophily and preferential recruitment obtained by using the full subgraph information with the identification intervals computed using the RDS data alone.  


\section{Preliminaries}

\subsection{Basic assumptions}

\label{sec:assump}

We first state some basic assumptions about the population social network and the RDS recruitment process. These assumptions are implicit in the original work on statistical inference for RDS \citep{Heckathorn1997Respondent,Salganik2004Sampling,Volz2008Probability,Gile2010Respondent,Gile2011Improved}.  First, we place RDS in its proper network-theory context. 
\begin{assumption}
 The social network connecting members of the target population exists and is an undirected graph $G=(V,E)$ with no parallel edges or self-loops.
 \label{assump:net}
\end{assumption}
\noindent Members of the target population are vertices in $V$, and edges in $E$ represent social ties between individuals.  A seed is a vertex that is not recruited, but is chosen by some other mechanism, not necessarily random.  A recruiter is a vertex known to the study that has at least one coupon.  A susceptible vertex is not yet known to the study, but has at least one neighbor in $G$ that is a recruiter. 
Every vertex $i\in V$ has a binary attribute, trait, or covariate $Z_i$ that is observed only when $i$ is recruited.  We focus here on binary attributes for simplicity, but the arguments presented below can be extended to continuous covariates by introducing a similarity metric.  
\begin{assumption}
  RDS recruitments happen across edges in $G$ connecting a recruiter to a susceptible vertex.
  \label{assump:recruit}
\end{assumption}
\noindent Finally, we state a practical assumption related to the conduct of real-world RDS studies. 
\begin{assumption}
  No subject can be recruited more than once.  
  \label{assump:nomult}
\end{assumption}
\noindent While Assumption \ref{assump:nomult} is always followed in empirical RDS studies, it is ignored in idealized models of recruitment used to justify the form of traditional RDS estimators \citep{Salganik2004Sampling,Volz2008Probability}.  Since Assumption \ref{assump:nomult} is always true in practice, we will always take it as true in what follows. 


\subsection{The structure of RDS data}

We now define the network data collected by typical RDS studies.  Variants of these definitions were first given by \citet{Crawford2014Graphical}.  Under Assumptions \ref{assump:net}, \ref{assump:recruit}, and \ref{assump:nomult} the RDS recruitment path reveals a subgraph of $G$ that is observable by researchers.  
\begin{defn}[Recruitment graph]
  The directed recruitment graph is $G_R=(V_R,E_R)$, where $V_R$ is the set of $n$ sampled vertices (including seeds) and a directed edge from $i$ to $j$ indicates that $i$ recruited $j$.
\end{defn}
\noindent 
Knowledge of the edges connecting observed vertices reveals the induced subgraph of respondents.
\begin{defn}[Recruitment-induced subgraph] 
  The recruitment-induced subgraph is an undirected graph $G_S=(V_S, E_S)$, where $V_S=V_R$ consists of $n$ sampled vertices; and $\{i,j\}\in E_S$ if and only if $i\in V_S$, $j\in V_S$, and $\{i,j\}\in E$.
\end{defn}
\noindent Subjects also report the number of people they know (but not their identities) who are members of the target population.
\begin{defn}[Degree]
  The degree $d_i$ of $i\in V_R$ is the number of edges incident to $i$ in $G$. 
\end{defn}
\noindent Let $\bd_R$ and $\mathbf{t}$ be the $n\times 1$ vectors of recruited vertices' degrees and times of recruitment in the order they entered the study, and let $M$ be the set of seeds.  Label the vertices in $V_R$ $1,\ldots,n$, in the order of their recruitment in the study.  Label the remaining vertices in $V\setminus V_R$ arbitrarily with the numbers $n+1,\ldots,N$.  Furthermore, we observe a vector $\Z_R=(Z_1,\ldots,Z_n)$ of subjects' binary trait values.  Researchers conducting an RDS study only observe $G_R$, $\mathbf{d}_R$, $\mathbf{t}_R$, and $\Z_R$.  

One more definition will assist us in defining sample measures of homophily and preferential recruitment.  Let $U=\{u\notin V_R:\ \exists\ v\in V_R \text{ with } \{v,u\}\in E\}$ be the set of unsampled vertices connected by at least one edge to a sampled vertex in $V_R$ at the end of the study. Let $E_U = \{ \{v,u\}:\ v\in V_R,\  u\in U\text{, and } \{v,u\}\in E\}$ be the set of edges connecting vertices in $U$ to sampled vertices in $V_R$.  
\begin{defn}[Augmented recruitment-induced subgraph]
  The augmented recruitment-induced subgraph is an undirected graph $G_{SU}=(V_{SU},E_{SU})$, where $V_{SU} = V_S \cup U$ and $E_{SU}=E_S \cup E_U$.  
  \label{defn:augmented}
\end{defn}
\noindent Note that $G_{SU}$ does not contain edges between vertices in $U$, and contains no vertices that are not connected to a vertex in $V_R$. Let $\Z_{SU}$ be the set of traits of all vertices in $V_{SU}$.  Figure \ref{fig:augmented} shows an example of a population graph $G$, the recruitment graph $G_R$, the recruitment-induced subgraph $G_S$, and the augmented recruitment-induced subgraph $G_{SU}$.

\begin{figure}
  \centering
  \includegraphics[width=0.8\textwidth]{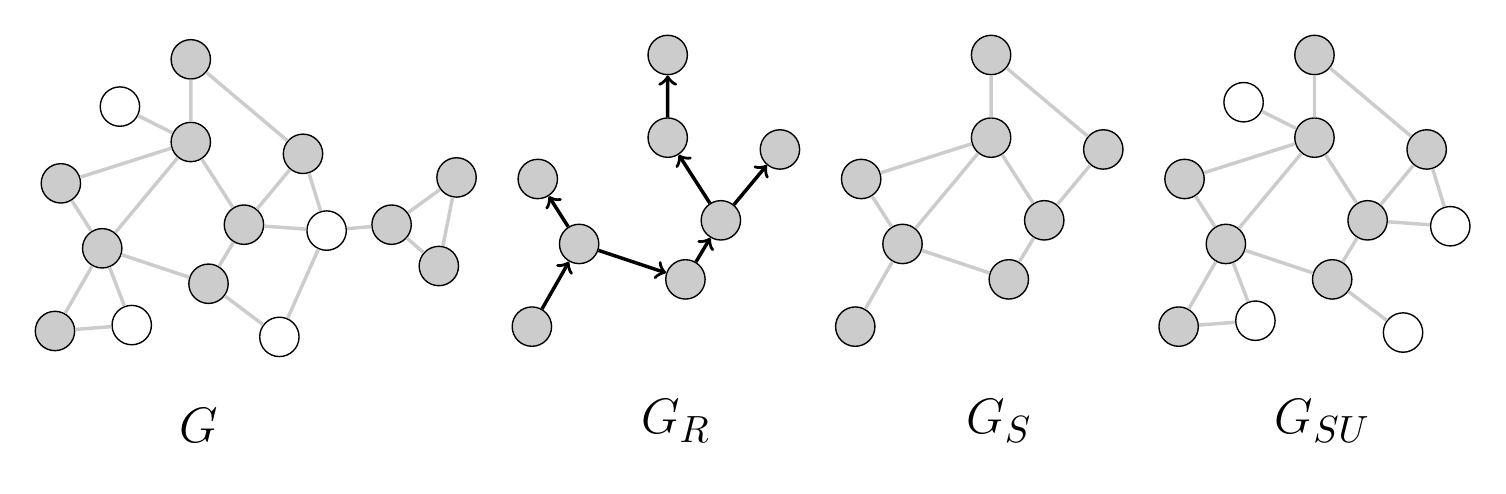}
  \caption{Illustration of the population graph $G$, the recruitment graph $G_R$, the recruitment-induced subgraph $G_S$, and the augmented recruitment-induced subgraph $G_{SU}$. Only $G_R$ and the degrees of recruited vertices are revealed to researchers conducting an RDS study. However, the degrees of recruited vertices place topological constraints on the space of augmented recruitment-induced subgraphs $G_{SU}$. }
 \label{fig:augmented}
\end{figure}


\section{Definitions and inferential targets}

Suppose $G=(V,E)$ is the population graph.  Consider an RDS sample of size $n$ with recruitment graph $G_R=(V_R,E_R)$, degrees $\bd_R$, and traits $\Z_R$.  Let $G_{SU}=(V_{SU},E_{SU})$ be the augmented subgraph for this sample, with traits $\Z_{SU}$.  The observed traits $\Z_R$ are a subset of $\Z_{SU}$.  Our inferential targets are the subgraph homophily and the subgraph preferential recruitment, defined formally below.  These parameters are data-adaptive \citep{vanderlaan2013statistical,balzer2015targeted}: they are properties of the network and trait values of vertices proximal to the RDS sample.


\subsection{Homophily}

Let $\A=\{A_{ij}\}$ be the $N\times N$ adjacency matrix of the population graph $G$ and let $A_{ij}$ be a binary variable indicating presence or absence of an undirected edge between $i$ and $j$. 
\begin{defn}[Subgraph homophily]
  The subgraph homophily is the correlation between $A_{ij}$ and the indicator $\indicator{Z_i=Z_j}$, conditional on $i\in V_R$ and $j\in V_{SU}$, 
  \begin{equation}
     h(G,G_R,\Z) = \rho(A_{ij}, \indicator{Z_i=Z_j}|i\in V_R,j\in V_{SU}) 
  \label{eq:homophily}
  \end{equation}
  where $\rho(\cdot)$ denotes the Pearson correlation over every $i\in V_R$ and $j\in V_{SU}$.
\label{defn:homophily}
\end{defn}
\noindent Inference about homophily in the population graph is identical to inference about homophily in the augmented subgraph. 
\begin{prop}
  $h(G,G_R,\Z) = h(G_{SU},G_R,\Z_{SU})$.
 \label{prop:homophily}
\end{prop}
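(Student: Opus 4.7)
The plan is to unpack both sides of the asserted equality in terms of Definition \ref{defn:homophily} and verify that the two Pearson correlations are computed over exactly the same collection of index pairs $(i,j)$, with exactly the same pair of summands $A_{ij}$ and $\indicator{Z_i=Z_j}$ at each pair. If both ingredients coincide pointwise, the empirical correlations must coincide.

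First, I would address the indexing set. On the left, the correlation ranges over $\{(i,j):i\in V_R,\ j\in V_{SU}\}$. On the right, the same formula is applied with $G_{SU}$ playing the role of the population graph, so I need to check that forming the augmented recruitment-induced subgraph of $G_{SU}$ relative to the recruitment graph $G_R$ reproduces $G_{SU}$ itself, i.e.\ that Definition \ref{defn:augmented} is idempotent. This is straightforward: since $V_R\subseteq V_{SU}$ and $E_{SU}=E_S\cup E_U$ is exactly the set of edges of $G_{SU}$ incident to $V_R$, the set $U$ constructed from $G_{SU}$ equals the original $U$, and hence the vertex and edge sets produced by re-applying the construction are $V_{SU}$ and $E_{SU}$ themselves. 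Therefore the indexing sets on the two sides of the claimed equality are identical.

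Next I would compare the summands. Let $\A^{SU}$ denote the adjacency matrix of $G_{SU}$. For any $i\in V_R$ and $j\in V_{SU}=V_R\cup U$, by the definition of $E_{SU}$ either $\{i,j\}\in E_S$ (when $j\in V_R$) or $\{i,j\}\in E_U$ (when $j\in U$) if and only if $\{i,j\}\in E$; so $A^{SU}_{ij}=A_{ij}$ on the conditioning set. Since $\Z_{SU}$ is by definition the restriction of $\Z$ to $V_{SU}$, the indicator $\indicator{Z_i=Z_j}$ is also unchanged for every $(i,j)$ in the index set. Hence both random variables entering the Pearson correlation agree pointwise across the two sides, and the correlations therefore coincide, establishing the proposition.

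There is no substantive obstacle here; the entire argument is definitional bookkeeping. The only place where one must be careful is verifying the idempotence of the augmented-subgraph construction, because without it the right-hand side could conceivably induce a different conditioning set than the left-hand side.
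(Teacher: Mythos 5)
Your proof is correct and takes essentially the same route as the paper, which disposes of the claim in one line by noting that every pair $(i,j)$ in the conditioning set has $i\in V_R$, so all adjacency and trait information entering the correlation is already contained in $(G_{SU},\Z_{SU})$. Your additional checks (idempotence of the augmented-subgraph construction and pointwise agreement of the summands) just make that one-line observation explicit.
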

\noindent The proof follows directly from observation that $i\in V_R$ implies $i\in V_{SU}$.  To ease notation, we will often use $h$ to refer to $h(G,G_R,\Z)$.  To compute $h(G,G_R,\Z)$, suppose $(G_{SU},\Z_{SU})$ is known and let $\A_{SU}$ be the adjacency matrix of $G_{SU}$.  Then by Proposition \ref{prop:homophily}, the subgraph homophily defined in \eqref{eq:homophily} can be computed as 
\begin{equation}
  h(G_{SU},G_R,\Z_{SU}) =  \frac{\sum_{i\in V_R}\ \sum_{j\in V_{SU},j\neq i} (A_{ij} - \bar{A}_{SU})(\indicator{Z_i=Z_j} - \bar{Z}_{SU})}{\left(\binom{|V_R|}{2} + |V_R||U| \right) \sigma(\A_{SU})\ \sigma(\Z_{SU})}
  \label{eq:homophily2}
\end{equation}
where $\binom{|V_R|}{2} + |V_R||U|$ is the number of potential edges in $G_{SU}$, and
\[ \bar{A}_{SU} = \frac{ |E_{SU}|}{\binom{|V_R|}{2} + |V_R||U|},  \]
\[ \bar{Z}_{SU} = \frac{\sum_{i\in V_R}\ \sum_{j\in V_{SU},j\neq i} \indicator{Z_i=Z_j} }{ \binom{|V_R|}{2} + |V_R||U|},  \]
\[ \sigma(\A_{SU}) = \sqrt{\frac{ \sum_{i\in V_R}\ \sum_{j\in V_{SU},j\neq i} (A_{ij} - \bar{A}_{SU})^2  }{\binom{|V_R|}{2} + |V_R||U|} }, \]
and 
\[ \sigma(\Z_{SU}) = \sqrt{\frac{\sum_{i\in V_R}\ \sum_{j\in V_{SU},j\neq i} (\indicator{Z_i=Z_j} - \bar{Z}_{SU})^2 }{\binom{|V_R|}{2} + |V_R||U|} }. \]


\subsection{Preferential recruitment}

Let $S_i(t)$ be the set of susceptible neighbors of $i\in V_R$ just before time $t$ (the set-valued function $S_i(t)$ is left-continuous).  Recall that $\mathbf{t}_R=(t_1,\ldots,t_n)$ is the vector of times of recruitments, so $S_i(t_j)$ is the set of susceptibles connected to $i$ just before recruitment of $j$.  Let $r_j$ be the recruiter of the sampled vertex $j\in V_S$ ($r_j$ is undefined when $j$ is a seed).  Let $\mathbf{t}=(t_1,\ldots,t_N)$ be the ordered vector of recruitment times, where the first $n$ times $t_1,\ldots,t_n$ are finite, and the remaining times $t_{n+1}=\cdots=t_N=\infty$.  Let $i\in V$ be a vertex, let $S\subseteq V$ where $i\notin S$, and let
$\text{same}(i,S) = |\{k\in S:\ Z_i=Z_k\}|$
be the number of vertices in $S$ that have the same trait value as $i$.  Then $\text{same}(i,S_i(t_j))$ is the number of same-type susceptible vertices connected to a recruiter $i$ at the time of the $j$th recruitment.  For $j\notin M$, let $Y_j = \indicator{Z_j=Z_{r_j}}$ be the indicator that recruited vertex $j$ has the same trait as its recruiter.  Let $\Z(S)$ be the set of trait values indexed by the set $S$.  We first define proportional, or unbiased, recruitment. 
\begin{defn}[Proportional recruitment]
  Recruitment of $j$ by $r_j$ is proportional if $\Pr(Y_j|j\in S_{r_j}(t_j), \Z(S_{r_j}(t_j))) = \text{same}(r_j,S_{r_j}(t_j))/|S_{r_j}(t_j)|$. 
 \label{defn:proprec}
\end{defn}
\noindent Under proportional recruitment, the probability of a recruiter recruiting a susceptible neighbor with the same trait value is proportional to the number of its susceptible neighbors with the same trait.  In other words, recruitment is uniformly at random among susceptible neighbors.  Let $Y_j^\text{prop}$ be the outcome of a recruitment event in which the recruiter $r_j$ obeys proportional recruitment as in Definition \ref{defn:proprec}. 
\begin{defn}[Subgraph preferential recruitment]
  The subgraph preferential recruitment is the average deviation from proportional recruitment, given knowledge of $G$, $G_R$, $\mathbf{t}$, and $\Z$,  
\begin{equation}
  p(G,G_R,\mathbf{t},\Z) = \E[Y_j|j\in S_i(t_j),\Z(S_i(t_j))] - \E[Y_j^\text{prop}|j\in S_i(t_j),\Z(S_i(t_j))] 
  \label{eq:prefrec}
\end{equation}
where the expectation is over recruitments, and $i$ is the recruiter. 
\label{defn:prefrec}
\end{defn}
\noindent As before, inference about preferential recruitment in the population graph is identical to inference about preferential recruitment in the augmented subgraph. 
\begin{prop}
$p(G,G_R,\T,\Z) = p(G_{SU},G_R,\T_{SU},\Z_{SU})$.
\label{prop:prefrec}
\end{prop}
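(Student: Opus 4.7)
The plan is to unwind the definition of $p(G,G_R,\mathbf{t},\Z)$ in \eqref{eq:prefrec} and verify that every quantity on which it depends is preserved when $G$ is replaced by the augmented subgraph $G_{SU}$. As with Proposition \ref{prop:homophily}, the result is essentially bookkeeping: the expectations only involve recruiters, their susceptible neighbors, and the trait values of those vertices, all of which lie in $V_{SU}$ by construction.

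First I would observe that every recruiter $i$ lies in $V_R \subseteq V_{SU}$. Then for any recruiter $i$ and any recruitment time $t_j$, the susceptible neighbor set $S_i(t_j) = \{v \in V\setminus V_R :\ \{i,v\}\in E,\ v \text{ not yet recruited}\}$ consists entirely of vertices in the set $U$ defined prior to Definition \ref{defn:augmented}, and the edges $\{i,v\}$ witnessing this membership lie in $E_U \subseteq E_{SU}$. Consequently $S_i(t_j)$, computed using $G$, coincides with $S_i(t_j)$ computed using $G_{SU}$, and $\Z(S_i(t_j))$ is a subset of $\Z_{SU}$. This is the only nontrivial step, since it is where the precise definition of the augmented subgraph is used; the key point is that $V_{SU}$ and $E_{SU}$ were constructed to contain exactly the susceptible neighbors of $V_R$ and the edges to them.

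Given this, I would then argue that both conditional expectations in \eqref{eq:prefrec} are functions only of $S_i(t_j)$ and $\Z(S_i(t_j))$ (plus $Z_{r_j}$ with $r_j \in V_R$). Indeed $Y_j = \indicator{Z_j = Z_{r_j}}$ depends only on traits of $r_j \in V_R$ and $j \in S_{r_j}(t_j) \subseteq V_{SU}$, so its conditional distribution is determined by $(G_{SU},G_R,\T_{SU},\Z_{SU})$. Likewise $Y_j^{\text{prop}}$ has, by Definition \ref{defn:proprec}, conditional mean $\text{same}(r_j,S_{r_j}(t_j))/|S_{r_j}(t_j)|$, which depends only on $S_{r_j}(t_j)$ and the traits of its elements. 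Therefore each term in the difference \eqref{eq:prefrec} is unchanged when $(G,\T,\Z)$ is replaced by $(G_{SU},\T_{SU},\Z_{SU})$, yielding $p(G,G_R,\T,\Z) = p(G_{SU},G_R,\T_{SU},\Z_{SU})$.

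The only real obstacle is the first step, and it is conceptual rather than technical: one must check that \emph{every} susceptible neighbor that could ever be involved in a recruitment event from $V_R$ has been included in $V_{SU}$. Since recruitment requires an edge from a recruiter in $V_R$ to a susceptible, and since $U$ is defined to contain all vertices outside $V_R$ adjacent to $V_R$ in $G$, this holds by construction and the remainder of the proof is a one-line substitution.
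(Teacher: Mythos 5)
Your overall strategy is the paper's own (the paper disposes of this proposition in one line, by the same reasoning as Proposition \ref{prop:homophily}): every quantity entering \eqref{eq:prefrec} is determined by vertices in $V_{SU}$, edges in $E_{SU}$, and traits in $\Z_{SU}$, so replacing $G$ by $G_{SU}$ changes nothing. However, the step you single out as the crux contains a mischaracterization. You write $S_i(t_j)=\{v\in V\setminus V_R:\ \{i,v\}\in E,\ v\text{ not yet recruited}\}$ and conclude that $S_i(t_j)\subseteq U$. That is false: a susceptible neighbor of $i$ just before time $t_j$ is any neighbor of $i$ not yet recruited \emph{at that time}, and many such vertices are recruited later and hence lie in $V_R$, not in $U$. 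Indeed the recruitee $j$ itself satisfies $j\in S_{r_j}(t_j)$ and $j\in V_R$, so your set identity would exclude the very vertex whose recruitment is being modeled. The correct (and equally short) justification is that each $v\in S_i(t_j)$ is adjacent in $G$ to $i\in V_R$, and therefore lies either in $V_S=V_R$ (if eventually recruited, with $\{i,v\}\in E_S$) or in $U$ (if never recruited, with $\{i,v\}\in E_U$); in both cases $v\in V_{SU}$ and $\{i,v\}\in E_{SU}$, so $S_i(t_j)$ and $\Z(S_i(t_j))$ are the same whether computed in $G$ or in $G_{SU}$. With that repair the remainder of your argument goes through verbatim.
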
 
\noindent The proof follows from the same reasoning as the proof Proposition \ref{prop:homophily}.  Definition \ref{defn:proprec} and Proposition \ref{prop:prefrec} allow us to write the subgraph preferential recruitment as 
\begin{equation}
  p(G_{SU},G_R,\mathbf{t}_R,\Z_{SU}) = \frac{1}{n-|M|} \sum_{j\notin M} \left( Y_j - \frac{\text{same}(r_j,S_{r_j}(t_j))}{|S_{r_j}(t_j)|} \right) 
\label{eq:prefrec2}
\end{equation}
where the sum is over non-seed recruited vertices $j$. To ease notation, we will sometimes write $p$ in place of $p(G,G_R,\T,\Z)$.


\section{Identification}

The parameters $h$ and $p$ depend on possibly unobserved edges between pairs of recruited vertices, and between recruited vertices and unrecruited vertices.  The observed recruitment subgraph $G_R$ and reported degrees $\bd_R$ place strong topological restrictions on the structure of $G_{SU}$, and hence imply restrictions on $h$ and $p$.  
\begin{defn}[Compatibility] \label{defn:compatibility}
  The pair $(G_{SU},\Z_{SU})$ is compatible with the observed data $G_R$, $\bd_R$, and $\Z_R$ if 1) the set of recruited vertices is preserved: $V_R \subseteq V_{SU}$; 2) the set of recruitment edges is preserved: $E_R \subseteq E_{SU}$; 3) the set of recruited subjects' trait values is preserved $\Z_R\subseteq \Z_{SU}$; 4) all unsampled vertices are connected to a recruited vertex: every $u\in V_{SU}$ with $u\notin V_S$ has an edge $\{v,u\}$ such that $v\in V_S$; and 5) total degree is preserved: for every $i\in V_R$, $d_i = \sum_{j\in V_{SU}} \indicator{ \{i,j\}\in E_{SU} }$.
\end{defn}
\noindent Let $\mathcal{C}(G_R,\bd_R,\Z_R)$ be the set of pairs $(G_{SU},\Z_{SU})$ compatible with the observed data in the sense of Definition \ref{defn:compatibility} (this is a finite set).  

First, we examine whether the recruitment-induced subgraph $G_S$ and augmented recruitment-induced subgraph $G_{SU}$ are revealed by the observed data in RDS.  Recall that $d_i$ is the total degree of $i$ and let $d_i^r$ be the degree of subject $i$ in the recruitment subgraph $G_R$.  
\begin{prop} 
  Suppose there exist $i\in V_R$ and $j\in V_R$ with $i\neq j$, $d_i^r<d_i$, and $d_j^r<d_j$.  Then neither $G_S$ nor $G_{SU}$ are identified.  
  \label{prop:GS}
\end{prop}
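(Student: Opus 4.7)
The plan is to produce two distinct compatible pairs in $\mathcal{C}(G_R,\bd_R,\Z_R)$ that disagree on the edges among $V_R$; that simultaneously establishes non-identification of $G_S$ and $G_{SU}$. The lever is the slack at the two given vertices: writing $s = d_i - d_i^r \geq 1$ and $t = d_j - d_j^r \geq 1$, each of $i$ and $j$ must have at least one edge in $G$ that does not appear in $E_R$, and nothing in the observed data $(G_R,\bd_R,\Z_R)$ dictates where these extra edges terminate.

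First, I would build a ``baseline'' compatible pair $(G_{SU}^{(1)},\Z_{SU}^{(1)})$ by attaching $s$ fresh unsampled pendant vertices to $i$ and $t$ fresh unsampled pendant vertices to $j$, all distinct from each other and from $V_R$, with arbitrary trait values. Checking the five conditions of Definition \ref{defn:compatibility} is routine: $V_R$, $E_R$, and $\Z_R$ are preserved by construction; every new unsampled vertex is adjacent to a recruited vertex; and the degree of every $k\in V_R$ matches $d_k$, since the slack at $i$ and $j$ is exactly consumed by the new pendants, while every other recruited vertex already has $d_k^r=d_k$ incident edges in $E_R$ (if not, include those vertices in the construction in the obvious way; this does not affect the argument).

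Next, I would construct an alternative compatible pair. In the main case $\{i,j\}\notin E_R$, I add the single edge $\{i,j\}$ to $E_{SU}^{(2)}$, which consumes one unit of slack at each endpoint, and then attach $s-1$ and $t-1$ fresh pendants at $i$ and $j$ respectively. The resulting $(G_{SU}^{(2)},\Z_{SU}^{(2)})$ is again compatible, but it contains the edge $\{i,j\}\in E_S^{(2)}$, whereas $\{i,j\}\notin E_S^{(1)}$. Consequently $G_S^{(1)}\neq G_S^{(2)}$ and $G_{SU}^{(1)}\neq G_{SU}^{(2)}$, so neither $G_S$ nor $G_{SU}$ can be recovered from the observed data.

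The step I expect to be the main obstacle is the corner case $\{i,j\}\in E_R$, where Assumption \ref{assump:net} forbids a parallel $\{i,j\}$ edge and the simple ``add the edge'' trick is unavailable. I would handle this by comparing configurations that either merge or split the unsampled neighbors of $i$ and $j$: one compatible pair in which a single unsampled vertex of degree two is adjacent to both, versus one in which their slack edges terminate at disjoint pendants. These pairs differ in $G_{SU}$ immediately. Producing a difference at the level of $G_S$ in this subcase requires a more delicate bookkeeping argument (for instance, exploiting any slack available at third vertices of $V_R$, or rerouting slack of $i$ to a recruited vertex not already adjacent to it in $E_R$); this is the technical pinch point I would focus on.
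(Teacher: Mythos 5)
Your main-case argument is essentially the paper's proof: the paper likewise exhibits two compatible augmented subgraphs that differ only in whether the slack at $i$ and $j$ is spent on the edge $\{i,j\}$ or on edges to an unsampled vertex $u$ (the paper uses a single \emph{shared} $u$ of degree two, so that deleting $\{i,u\}$ and $\{j,u\}$ and adding $\{i,j\}$ preserves all degrees exactly), and both constructions change $E_S$ and $E_{SU}$ simultaneously. The corner case you flag, $\{i,j\}\in E_R$, is not handled by the paper either --- its swap also tacitly requires $\{i,j\}\notin E_{SU}^1$ --- and your hesitation there is warranted: if, for instance, $V_R=\{i,j\}$ with $\{i,j\}\in E_R$, then $E_S$ is forced to equal $\{\{i,j\}\}$ and $G_S$ \emph{is} identified even though both vertices have slack, so non-identification of $G_S$ genuinely requires a slack pair with $\{i,j\}\notin E_R$, whereas non-identification of $G_{SU}$ survives via your shared-versus-split pendant comparison (which is precisely the paper's $G_{SU}^1$ device).
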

\noindent Proof is given in the Appendix. This result establishes the conditions under which statements about the population graph proximal to the sample can be made precise.  Next, we define the information about $h$ and $p$ that is revealed by the observed data.  
\begin{defn}[Identification region] \label{defn:identification}
  The identification regions for $h$ and $p$ are given by the smallest intervals that contain $h(G_{SU},G_R,\Z_{SU})$ and $p(G_{SU},G_R,\mathbf{t}_R,\Z_{SU})$ for $(G_{SU},\Z_{SU})\in \mathcal{C}(G_R,\bd_R,\Z_R)$.
\end{defn}
\noindent When the identification region for $h$ or $p$ contains only a single point, that parameter is point identified.  Bounds for $h$ and $p$ on the set $\mathcal{C}(G_R,\bd_R,\Z_R)$, as given in Definition \ref{defn:identification}, are sharp: there is no narrower bound that contains all possible values of these parameters.  
\begin{defn}[Identification rectangle]
  The identification rectangle for $h$ and $p$ (provided these quantities are defined) is the smallest rectangle in $[-1,1]\times [-1,1]$ that contains all values of $\big(h(G_{SU},G_R,\Z_{SU}), p(G_{SU},G_R,\mathbf{t}_R,\Z_{SU})\big)$ for $(G_{SU},\Z_{SU})\in \mathcal{C}(G_R,\bd_R,\Z_R)$.
  \label{defn:idrect}
\end{defn}
\noindent The identification rectangle is obtained by taking the Cartesian product of the identification regions for $h$ and $p$.  Finally, we provide sufficient conditions for the identification regions for $h$ and $p$ to contain more than one point. 
\begin{prop}
  Suppose there exist two vertices $i\in V_R$ and $j\in V_R$ such that $d_i>d_i^r$ $d_j>d_j^r$, and $Z_i\neq Z_j$.  Then $h$ is not point identified.
 \label{prop:homophilyid}
\end{prop}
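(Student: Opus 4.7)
The plan is to exhibit two compatible augmented subgraphs in $\mathcal{C}(G_R,\mathbf{d}_R,\Z_R)$ that yield different values of $h$. Assume without loss of generality that $Z_i=1$ and $Z_j=0$, and let $n_0, n_1$ denote the number of recruited vertices with trait $0$ and $1$ respectively.

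First I would construct a baseline compatible pair $\mathcal{G}^{(0)}$: for each $v\in V_R$ with $d_v>d_v^r$, attach $d_v-d_v^r$ private fresh unsampled neighbors (disjoint across distinct $v$), each assigned trait value $0$, and introduce no non-recruitment edges among sampled vertices. The five clauses of Definition \ref{defn:compatibility} are immediate, and $i$ and $j$ each have at least one private unsampled neighbor by the hypotheses $d_i > d_i^r$ and $d_j > d_j^r$.

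Next I would obtain a second compatible pair $\mathcal{G}^{(1)}$ by a local re-wiring that is available precisely because both $i$ and $j$ carry slack degree: pick private unsampled neighbors $u_i$ of $i$ and $u_j$ of $j$, delete the edges $\{i,u_i\}$ and $\{j,u_j\}$, insert the edge $\{i,j\}$, and remove $u_i$ and $u_j$ from $V_{SU}$ (they no longer have a neighbor in $V_R$). The degrees of $i$ and $j$ are preserved, every other vertex is untouched, and clauses 1--5 of compatibility continue to hold.

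To compare the resulting $h$-values I would rewrite \eqref{eq:homophily2} as
\[
h \;=\; \frac{T\,S_{AZ}-n_E\,n_Z}{\sqrt{n_E(T-n_E)\,n_Z(T-n_Z)}},
\]
where $T=\binom{n}{2}+n|U|$, $n_E=|E_{SU}|$, $n_Z=\sum \indicator{Z_v=Z_w}$, and $S_{AZ}=\sum A_{vw}\indicator{Z_v=Z_w}$ (sums over unordered pairs with at least one endpoint in $V_R$). The re-wiring shifts these sufficient statistics by the explicit increments $\Delta T=-2n$, $\Delta n_E=-1$, $\Delta n_Z=-2n_0$, and $\Delta S_{AZ}=-1$. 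The hypothesis $Z_i\neq Z_j$ enters essentially in the last increment: among the three toggled edges $\{i,u_i\},\{j,u_j\},\{i,j\}$, only $\{j,u_j\}$ is concordant (it joins two trait-$0$ vertices), while $\{i,u_i\}$ and $\{i,j\}$ are both discordant. Substituting and clearing radicals reduces the putative equality $h(\mathcal{G}^{(0)})=h(\mathcal{G}^{(1)})$ to an explicit polynomial identity in $n, n_0, n_E, |U|, S_{AZ}$ that generically fails.

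The main obstacle is this final algebraic step: because both the numerator and the denominator of the correlation shift between the two structures, some care is needed to rule out coincidental equality. If a particular RDS sample happens to lie in the degenerate locus where $h(\mathcal{G}^{(0)})=h(\mathcal{G}^{(1)})$, I would introduce a third compatible pair $\mathcal{G}^{(2)}$ obtained from $\mathcal{G}^{(0)}$ by flipping the trait of a single unsampled vertex; this preserves $T$ and $n_E$ while shifting $S_{AZ}$ by $\pm 1$ and $n_Z$ by $n_1-n_0$, so at least two of the three values $h(\mathcal{G}^{(0)}), h(\mathcal{G}^{(1)}), h(\mathcal{G}^{(2)})$ must differ, and the identification region for $h$ cannot be a single point.
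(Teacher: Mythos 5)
Your construction of compatible pairs is fine, but the argument does not close. The crux of the proposition is to produce two elements of $\mathcal{C}(G_R,\bd_R,\Z_R)$ whose $h$-values \emph{provably} differ, and your chosen perturbation --- contracting two pendant edges into the single edge $\{i,j\}$ and deleting $u_i,u_j$ --- simultaneously moves $T$, $|U|$, $|E_{SU}|$, $n_Z$, and $S_{AZ}$, so both the numerator and the denominator of the correlation change. You concede that the resulting polynomial identity only ``generically fails,'' but genericity is not enough here: the proposition must hold for \emph{every} dataset satisfying the hypotheses, including ones sitting on your degenerate locus. The fallback via a third graph $\mathcal{G}^{(2)}$ does not repair this, because the claim that ``at least two of the three values must differ'' is asserted, not proved --- flipping the trait of an unsampled vertex again moves both $n_Z$ (by $n_1-n_0$) and $S_{AZ}$, so $h(\mathcal{G}^{(0)})=h(\mathcal{G}^{(2)})$ is another unexcluded coincidence, and nothing rules out all three values coinciding.

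The missing idea, which is how the paper proceeds, is to choose a perturbation that holds the denominator of \eqref{eq:homophily2} fixed so that only the covariance sum moves. Concretely: take any compatible $(G_{SU}^1,\Z_{SU}^1)$ in which $i$ is joined to a fresh unsampled vertex $u_1$ with $Z_{u_1}=Z_i$ and $j$ to a fresh $u_2$ with $Z_{u_2}=Z_j$ (possible because $d_i>d_i^r$ and $d_j>d_j^r$), and let $(G_{SU}^2,\Z_{SU}^2)$ be the same pair with the two pendant edges swapped to $\{i,u_2\}$ and $\{j,u_1\}$. This leaves $V_{SU}$, $U$, $|E_{SU}|$, all degrees, and all traits unchanged, hence $\bar{A}_{SU}$, $\bar{Z}_{SU}$, $\sigma(\A_{SU})$, $\sigma(\Z_{SU})$, and the pair count are identical in the two configurations. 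Because $Z_i\neq Z_j$, the swap converts two concordant edges and two discordant non-edges into two discordant edges and two concordant non-edges, so the numerator changes by exactly $2\bigl[(1-\bar{A})(1-\bar{Z})+(0-\bar{A})(0-\bar{Z})-(1-\bar{A})(0-\bar{Z})-(0-\bar{A})(1-\bar{Z})\bigr]=2$, giving $h_1-h_2=2/\bigl[\bigl(\binom{|V_R|}{2}+|V_R||U|\bigr)\sigma(\A)\sigma(\Z)\bigr]>0$ with no case analysis. I recommend replacing your edge-contraction step with this trait-matched swap; your baseline construction can stay, provided you assign $u_i$ the trait of $i$ and $u_j$ the trait of $j$ rather than giving all unsampled vertices trait $0$ (with all pendant neighbors of trait $0$, the swap is vacuous).
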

\begin{prop}
 Suppose there exists a vertex $i\in V_R$ who recruited at least one other vertex $j\in V_R$, $j\neq i$, and $d_i>d_i^r$.  Then $p$ is not point identified.
 \label{prop:prefrecid}
\end{prop}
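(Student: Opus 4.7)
The plan is to exhibit two configurations in $\mathcal{C}(G_R,\bd_R,\Z_R)$ that yield different values of $p$. The key observation is that a single extra unsampled neighbor of $i$ with toggleable trait will lie in every susceptible set $S_i(t_j)$ for recruitments $j$ performed by $i$, and flipping its trait shifts the proportional-recruitment benchmark $\text{same}(i,S_i(t_j))/|S_i(t_j)|$ without altering any other summand of \eqref{eq:prefrec2}.

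First I would construct a baseline compatible augmented subgraph by attaching, for each $v\in V_R$ with $d_v>d_v^r$, exactly $d_v - d_v^r$ fresh pendant vertices to $v$ (each of degree one, with vertex sets disjoint across recruiters). All five clauses of Definition \ref{defn:compatibility} are immediate: recruitment edges and sampled trait values are preserved, every unsampled vertex is adjacent to a recruited vertex, and total degrees match. Since $d_i>d_i^r$, at least one such pendant $u$ is adjacent to $i$. In the baseline I would set $Z_u = Z_i$ and choose the remaining unsampled traits arbitrarily to complete $\Z_{SU}$.

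Next I would define the alternative configuration by keeping the same augmented subgraph and all trait values except setting $Z_u = 1-Z_i$. Compatibility is preserved since only an unsampled vertex's trait was altered and Definition \ref{defn:compatibility} does not constrain unsampled traits. For any recruitment event with recruiter different from $i$, neither the structure nor any trait in the relevant susceptible set was changed, so those summands in \eqref{eq:prefrec2} agree. For each recruitment $j$ with $r_j=i$, the vertex $u$ belongs to $S_i(t_j)$: $u$ is adjacent to $i$, is never recruited, and $i$ is a recruiter at time $t_j>t_i$. Hence $|S_i(t_j)|$ is identical in both configurations while $\text{same}(i,S_i(t_j))$ differs by exactly one. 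Summing over the (at least one, by hypothesis) recruitment by $i$ yields
\begin{equation*}
  p_0 - p_1 \;=\; \pm\,\frac{1}{n-|M|}\sum_{j:\, r_j=i}\frac{1}{|S_i(t_j)|} \;\neq\; 0.
\end{equation*}

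I expect no significant obstacle. The main technical check is verifying that $u$ lies in $S_i(t_j)$ for every recruitment by $i$ and that no other susceptible set is affected by the trait flip at $u$; both follow because $u$'s sole neighbor is $i$, so $u$ never enters the susceptible neighborhood of any other recruiter and is never recruited itself. The construction is essentially the minimal perturbation consistent with Definition \ref{defn:compatibility}, which mirrors the strategy one would also use for Proposition \ref{prop:homophilyid}.
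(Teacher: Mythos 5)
Your proof is correct and follows essentially the same route as the paper: both arguments fix a compatible augmented subgraph containing a pendant unsampled neighbor $u$ of $i$, flip $Z_u$, and observe that this changes the proportional-recruitment benchmark for $i$'s recruitments (and nothing else), so $p$ takes at least two values. Your version is marginally more explicit in constructing the baseline compatible subgraph and in summing the perturbation over all of $i$'s recruitments rather than a single one, but the idea is identical.
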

\noindent Proof is given in the Appendix.  

In practice, point identification of both subgraph homophily and subgraph preferential recruitment can only be achieved if the recruitment graph $G_R$ is nearly identical to the augmented recruitment-induced subgraph $G_{SU}$.  By Assumption \ref{assump:nomult}, the recruitment subgraph $G_R$ is acyclic, so $G_R=G_{SU}$ means that the population network proximal to recruited vertices is a tree, a situation that seems unlikely to occur in a real-world social network.  Furthermore, Propositions \ref{prop:homophilyid} and \ref{prop:prefrecid} apply directly to the case where all vertices in the population have been sampled, and we have $V_R=V$: if pendant edges remain, then $h$ and $p$ may not be point identified.


\section{Stochastic optimization for extrema of $h$ and $p$}

\label{sec:simanneal}

Unfortunately there are no general closed-form expressions for the extrema of $h$ and $p$ on $\mathcal{C}(G_R,\bd_R,\Z_R)$. The space of compatible subgraphs described by Definition \ref{defn:compatibility} can be very large, but straightforward optimization techniques permit finding these bounds quickly.  In this Section we introduce a stochastic optimization algorithm for finding the global optimum of an arbitrary function $J$ of $h$ and $p$, based on simulated annealing \citep{kirkpatrick1983optimization,cerny1985thermodynamical,Hajek1988Cooling,bertsimas1993simulated}.  The approach is similar to a quadratic programming framework introduced by \citet{depaula2014identification} for finding the identification set for certain functionals of graphs and vertex attributes.

Let $J:[-1,1]^2 \to \mathbb{R}$ be a function taking arguments $h(G_{SU},G_R,\Z_{SU})$ and $p(G_{SU},G_R,\mathbf{t}_R,\Z_{SU})$ for $(G_{SU},\Z_{SU})\in \mathcal{C}(G_R,\bd_R,\Z_R)$.  We choose this function, abbreviated $J(h,p)$, so that a desired feature of $\mathcal{C}(G_R,\bd_R,\Z_R)$ coincides with the maximum of $J$.  For example, the maximum of the function 
\[ J(h,p) = \frac{1}{1+\epsilon+ h} \]
on $\mathcal{C}(G_R,\bd_R,\Z_R)$ where $\epsilon>0$, coincides with the lower identification bound of $h$.  For concreteness in what follows, we will assume $J(h,p)$ has this form; similar definitions can be formulated individually to find the maximum of $h$, and the minimum and maximum of $p$. 

For $T>0$, define the objective function $\pi(h,p) \propto \exp[ J(h,p) / T ]$.  Our goal is to find $(G_{SU},\Z_{SU})\in \mathcal{C}(G_R,\bd_R,\Z_R)$ such that $\pi\big(h(G_{SU},G_R,\Z_{SU}), p(G_{SU},G_R,\mathbf{t}_R,\Z_{SU})\big)$ is maximized.  Let 
\[ K\big( (G_{SU},\Z_{SU}), (G_{SU}^*,\Z_{SU}^*)\big) \] 
be a transition kernel that describes the probability of moving from a state $(G_{SU},\Z_{SU})\in \mathcal{C}(G_R,\bd_R,\Z_R)$ to another state $(G_{SU}^*,\Z_{SU}^*)\in \mathcal{C}(G_R,\bd_R,\Z_R)$.  Let $T_t$ be a positive non-decreasing sequence indexed by $t$, with $\lim_{t\to\infty} T_t = 0$.  We construct an inhomogeneous Markov chain on $\mathcal{C}(G_R,\bd_R,\Z_R)$.   At step $t$, where the current state is $(G_{SU},\Z_{SU})$, we accept the proposed state $(G_{SU}^*,\Z_{SU}^*) \sim K\big((G_{SU},\Z_{SU}),\cdot\big)$ with probability 
\[ \rho_t = \min\left\{1, \exp\left[ \frac{J\big(h(G_{SU}^*,G_R,\Z_{SU}^*),p(G_{SU}^*,G_R,\mathbf{t}_R,\Z_{SU}^*)\big) - J\big(h(G_{SU},G_R,\Z_{SU}),p(G_{SU},G_R,\mathbf{t}_R,\Z_{SU})\big) }{T_t} \right] \right\} . \]
The proposal function is described formally in the Appendix. 

As $T_t\to 0$, the samples $(G_{SU},\Z_{SU})_t$ become more concentrated around local maxima of $\pi$.  Convergence of the sequence $(G_{SU},Z_{SU})_t$ to a global optimum depends on its ability to escape local maxima of $J$. The sequence $T_t$, called the ``cooling schedule'', controls the rate of convergence.  Let $\mathcal{M}$ denote the set of $(G_{SU},\Z_{SU})\in \mathcal{C}(G_R,\bd_R,\Z_R)$ for which $J\big( h(G_{SU},G_R,\Z_{SU}), p(G_{SU},G_R,\mathbf{t}_R,\Z_{SU})\big)$ is equal to the global maximum.  
Careful choice of $T_t$ ensures that the sequence of samples converges in probability to an element of $\mathcal{M}$. 
\begin{prop}
  Let the cooling schedule be given by $T_t = \big(\epsilon\log(t)\big)^{-1}$ where $\epsilon>0$ is a constant.  Then $\lim_{t\to\infty} \Pr\big( (G_{SU},\Z_{SU})_t \in \mathcal{M} \big) = 1$ .
  \label{prop:simanneal}
\end{prop}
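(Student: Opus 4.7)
The approach is to invoke the classical convergence theorem for simulated annealing of \citet{Hajek1988Cooling}, which guarantees that an inhomogeneous Metropolis chain on a finite state space driven by a symmetric, irreducible proposal kernel converges in probability to the set of global maxima of $J$ provided the cooling schedule satisfies $T_t \geq d^\ast/\log t$ for all sufficiently large $t$, where $d^\ast$ is the maximum ``depth'' of a non-global local maximum of $J$. The plan is therefore to verify three structural hypotheses (finiteness, irreducibility, and symmetry of the proposal) and then match the cooling rate.

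First, I would show that $\mathcal{C}(G_R,\bd_R,\Z_R)$ is finite. Definition \ref{defn:compatibility} fixes $V_R$, $E_R$, $\Z_R$, and the degrees $\bd_R$, and forces every vertex of $V_{SU}\setminus V_R$ to be adjacent to some recruited vertex, so the number of additional pendants is bounded by $\sum_{i\in V_R}(d_i - d_i^r)$. With this bounded vertex set, there are only finitely many degree-preserving edge configurations and finitely many binary trait labelings of the pendants. Next, I would verify that the kernel $K$ defined in the Appendix is irreducible on $\mathcal{C}(G_R,\bd_R,\Z_R)$: the elementary moves (constrained double edge-swaps that preserve $\bd_R$, together with pendant trait reassignments) must connect any two compatible states. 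The natural starting point is the classical fact that double edge-swaps connect the configuration space of simple graphs with a prescribed degree sequence; the adaptation here is to check that such swaps can be carried out while preserving $E_R\subseteq E_{SU}$, $\Z_R\subseteq \Z_{SU}$, and the requirement that every pendant remain attached to $V_R$. Finally, symmetry of $K$, $K\big((G_{SU},\Z_{SU}),(G_{SU}^\ast,\Z_{SU}^\ast)\big) = K\big((G_{SU}^\ast,\Z_{SU}^\ast),(G_{SU},\Z_{SU})\big)$, should be read off from the Appendix's construction; this ensures that the acceptance ratio $\rho_t$ yields the Boltzmann stationary distribution $\pi_t \propto \exp(J/T_t)$ for the homogeneous chain at each temperature.

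With the three hypotheses in hand, Hajek's theorem applies directly. Because $J$ is a bounded function on a finite set, the depth $d^\ast$ of any non-global local maximum of $J$ is finite. Choosing $\epsilon$ small enough that $1/\epsilon \geq d^\ast$, the prescribed schedule $T_t = \bigl(\epsilon\log t\bigr)^{-1}$ satisfies $T_t \geq d^\ast/\log t$, so Hajek's cooling condition holds and $\Pr\bigl((G_{SU},\Z_{SU})_t \in \mathcal{M}\bigr) \to 1$ follows.

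The hard part will be the irreducibility step: one must show that the constrained edge-swap and relabeling moves still traverse all of $\mathcal{C}(G_R,\bd_R,\Z_R)$ without leaving it, even after freezing the recruitment edges in $E_R$ and the observed traits $\Z_R$ and enforcing the pendant-attachment condition. A secondary piece of delicate bookkeeping is confirming exact symmetry of $K$ under these constraints: if the proposal is only reversible rather than symmetric, the acceptance probability $\rho_t$ would need a Metropolis--Hastings correction by the ratio of forward and backward proposal probabilities before Hajek's theorem could be invoked in its stated form.
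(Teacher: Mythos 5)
Your overall strategy---invoke \citet{Hajek1988Cooling} after checking finiteness of $\mathcal{C}(G_R,\bd_R,\Z_R)$, irreducibility of the proposal kernel, and the cooling condition---matches the paper's, and your attention to irreducibility and symmetry is actually more careful than the paper, which defers connectivity of the state space to a citation and does not discuss a Metropolis--Hastings correction. But there is a genuine gap at the one quantitative step that the proposition actually turns on. You leave the maximal depth $d^*$ as an unknown finite constant and then \emph{choose} $\epsilon$ small enough that $1/\epsilon \geq d^*$. That proves only that the schedule works for sufficiently small $\epsilon$, with a threshold depending on the unobserved landscape of $J$ over $\mathcal{C}(G_R,\bd_R,\Z_R)$; it does not prove the proposition as stated, which asserts convergence for the constant $\epsilon$ fixed in advance.

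The missing idea is that the $\epsilon$ in the cooling schedule is the same $\epsilon$ appearing in the objective $J(h,p) = 1/(1+\epsilon+h)$, and that this specific $J$ makes the depth bound automatic: since $J>0$ everywhere and $J \leq \max_{h\in[-1,1]} 1/(1+\epsilon+h) = 1/\epsilon$, the depth of any state (the largest drop in $J$ needed to escape to $\mathcal{M}$) satisfies $D^* \leq 1/\epsilon$ with no further assumptions. Hence $\sum_{t} \exp\left[-D^*/T_t\right] = \sum_{t} t^{-D^*\epsilon} \geq \sum_{t} t^{-1} = \infty$, and Hajek's divergence criterion holds for that very $\epsilon$ rather than for an $\epsilon$ you are free to shrink. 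Without this coupling between the objective and the schedule, your argument cannot rule out that the given $\epsilon$ cools too fast for the realized $d^*$. To repair your proof, replace ``choose $\epsilon$ small enough'' with the observation that positivity and boundedness of this particular $J$ force $d^*\epsilon \leq 1$.
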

\noindent The proof, which is an application of the result by \citet{Hajek1988Cooling}, is given in the Appendix. 

The optimization routine described here and in the Appendix is constructive: it returns the (possibly not unique) pair $(G_{SU},\Z_{SU})\in \mathcal{C}(G_R,\bd_R,\Z_R)$ that maximizes $\pi$.  Figure \ref{fig:example} shows a simple example RDS dataset $(G_R,\Z_R)$, the identification rectangle for $h$ and $p$, and compatible elements $(G_{SU},\Z_{SU})$ that achieve these bounds.  At top left is the recruitment subgraph $G_R$ with vertices shaded according to their type, and the pendant edges implied by each vertex's degree.  At top right is the identification rectangle whose boundaries are the extrema of $h$ and $p$. At bottom are the compatible subgraphs that achieve these extrema. The initial pair $(G_{SU},\Z_{SU})$ is chosen by randomly connecting pendant edges of recruited vertices to other recruited vertices, then connecting any remaining pendant edges to unique unrecruited vertices, with randomly assigned trait value.

\begin{figure}
  \centering
  \includegraphics[width=0.8\textwidth]{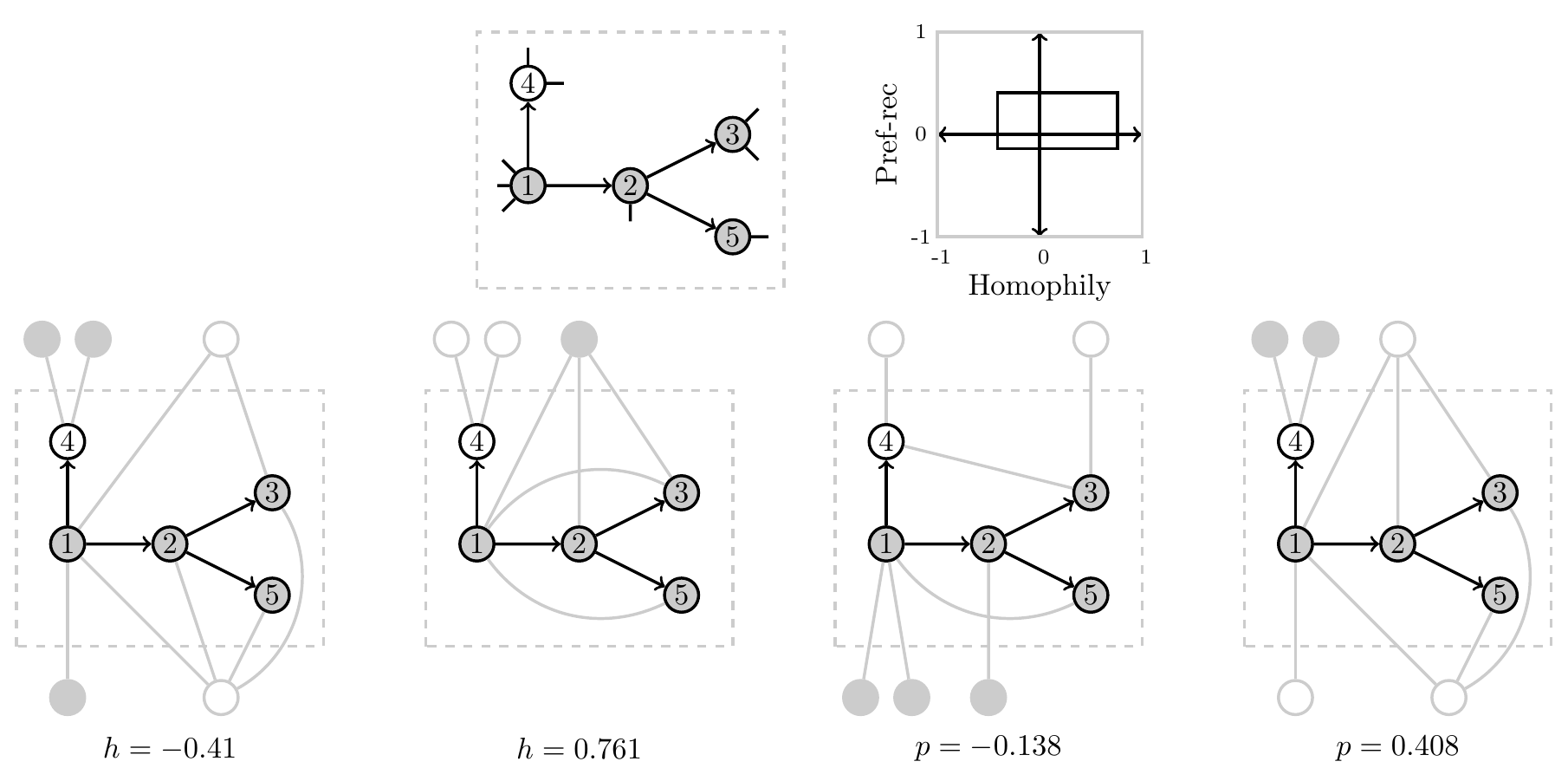}
  \caption{Illustration of extrema for $h$ and $p$ with corresponding augmented subgraphs $G_{SU}$ and $\Z_{SU}$. At top left, $G_R$ is shown with arrows indicating recruitments, and pendant edges implied by vertex degrees.  Vertices are labeled by the order of recruitment, and shaded according to their binary type. At top right, the joint homophily/preferential recruitment space is shown, and the identification rectangle for $(h,p)$ is shown.  Below, the augmented subgraphs corresponding to the minimum and maximum values for homophily and preferential recruitment are shown. Some extremal subgraphs are not unique.}
  \label{fig:example}
\end{figure}


\section{Application: injection drug users in Hartford, CT}

\label{sec:rdsnet}

\subsection{Study overview}

We now apply the ideas developed above to an extraordinary RDS dataset in which the augmented subgraph $G_{SU}$ of an RDS sample $G_R$ is known with near certainty.  In the RDS-net study, researchers conducted an RDS survey of $n=|V_R|=527$ injection drug users from $|M|=6$ seeds in Hartford, Connecticut.  Researchers simultaneously performed a census of the augmented recruitment-induced subgraph, consisting of $|V_{SU}|=2626$ unique injection drug users.  The primary purpose of RDS-net was to assess the dynamics of recruitment in a high-risk population of drug users; some details of this study design have been reported previously \citep{Mosher2015Qualitative}. Subjects were given \$25 for being interviewed, \$10 for recruiting another eligible subject (up to a maximum of three) and \$30 for completing a 2-month follow-up interview.  Subjects were required to be at least 18 years old, reside in the Hartford area, and to report injecting illicit drugs in the last 30 days.  The dates and times of recruitment were recorded for all sampled subjects. The study was approved by the Institute for Community Research institutional review board, and informed consent was obtained from all subjects. 

This study differs from typical RDS surveys because in addition to reporting their network degree, respondents also enumerated (nominated) their network alters -- other people eligible for the study whom they knew by name and could possibly recruit. Unsampled injection drug users nominated by more than one participant were matched using identifying characteristics including name (including aliases), photograph, multiple addresses, phone numbers, locations frequented, and social network links \citep{li2012social}. Comprehensive locator data, multiple data sources, and field observation notes were used to facilitate the matching process.  Outreach workers with expertise in the local injection drug-using community made a final assignment of unique subject indentifiers for nominated subjects, and links between subjects. This matching process revealed connections to unrecruited subjects along which no recruitment event took place, and resolved uniquely any unrecruited subjects nominated by more than one recruited subject \citep{weeks2002social}. The resulting ``nomination'' network is the augmented subgraph $G_{SU}$ described in Definition \ref{defn:augmented}. The usual RDS recruitment graph $G_R$ is a known subgraph of this nomination network. Figure \ref{fig:rdsnetdata} shows the nomination network $G_{SU}$ with the recruitment graph $G_R$ overlaid.  A participant's degree was defined as the sum of the number of people they nominated and the number of additional people recruited but not initially nominated. During the follow-up interview, 119 recruited subjects reported that they had given a coupon to someone other than the person who eventually returned one of their coupons. We therefore defined the recruitment graph $G_R$ as the network of coupon redemptions; we defined network alters of a recruiter in $G_{SU}$ as the \emph{union} of their nominees, and the individuals recruited using their coupons.

Demographic and trait data relevant to drug use were collected about each recruited subject.  Each recruited subject also reported the traits of their nominees.  Nominees who were never personally interviewed were assigned trait information as follows: if their nominating alters agreed on their trait value, that value was assigned to them.  If there was disagreement, the modal value was assigned. When trait information for a recruited subject or an unrecruited alter was absent or contradictory, it was treated as missing. 

\begin{figure}
  \centering
  \includegraphics[width=0.5\textwidth]{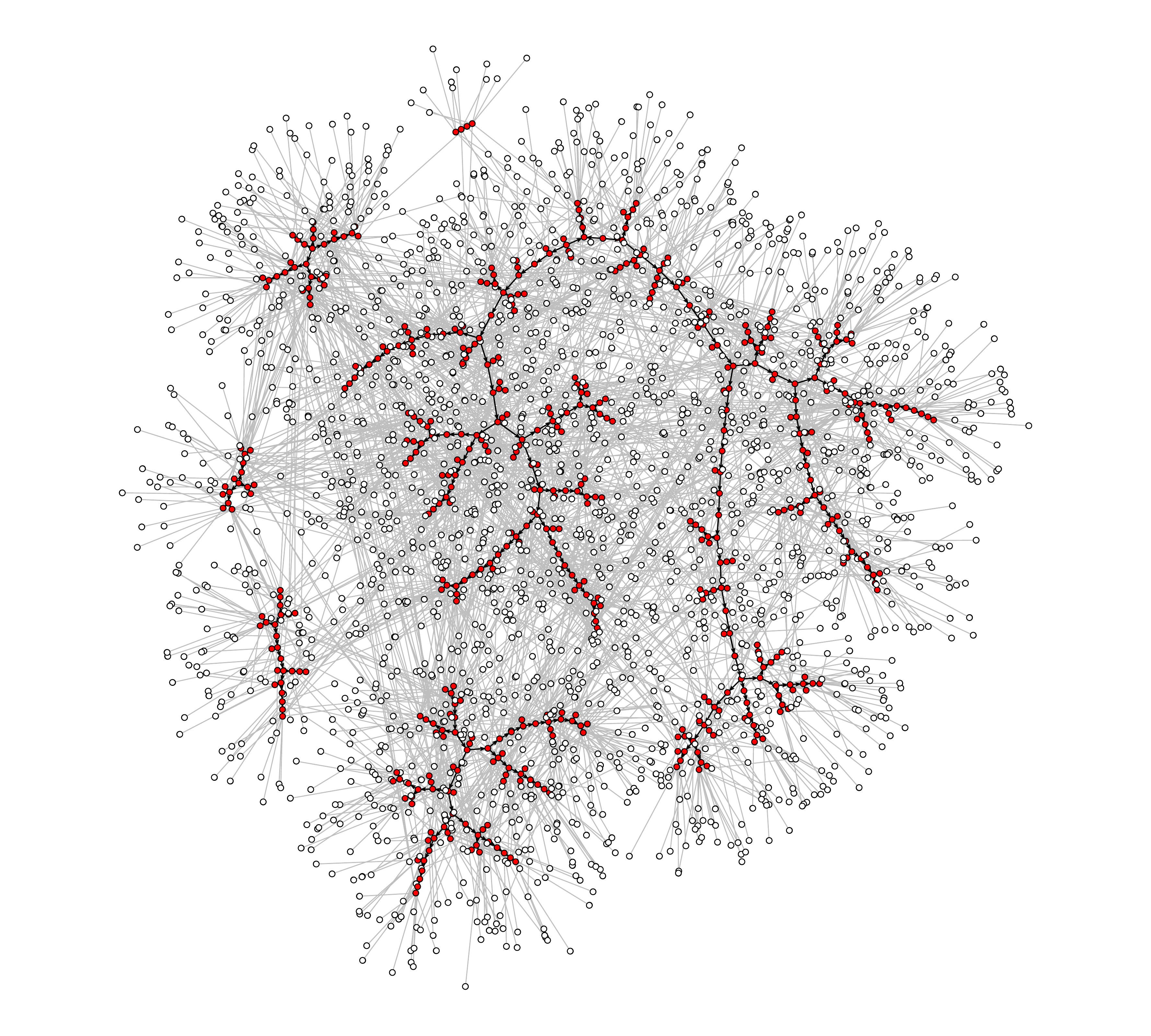}
  \caption{The nomination network and recruitment graph for injection drug users in Hartford, CT from the RDS-net study. Recruited subjects are shown in red, and recruitment edges are shown as a directed edge (arrow) from the recruiter to the recruitee.  Non-recruitment edges (linking recruited subjects to unrecruited subjects, or recruited to recruited subjects) are shown in gray. }
  \label{fig:rdsnetdata}
\end{figure}

\subsection{Descriptive results}

The RDS-net study includes $|V_{SU}|=2626$ people, of which $n=|V_R|=527$ were recruited subjects, and 2099 nominated but never-recruited subjects. There are $|E_{SU}|=3307$ edges in $G_{SU}$, of which $|E_S|=1180$ link recruited subjects to recruited subjects, and 2127 link recruited subjects to unrecruited subjects.  The mean overall degree of recruited subjects is $8.5\pm 3.8$ with maximum 26, and the mean degree of subjects in the recruitment-induced subgraph $G_S$ is $4.5\pm 2.7$ with maximum 22.  The mean number of network alters recruited by each subject is $0.99\pm 0.96$, with a maximum of 3. Non-seed recruitment was effective: while 201 people recruited no other subjects, 176 recruited one subject, 105 recruited 2 subjects, and 45 recruited three other subjects.

We selected three traits with the least missing data for analysis: gender, ``crack'' cocaine use, and homelessness.  This information is fully observed for every recruited subject, but some values are missing for nominated, but unrecruited subjects.  Table \ref{tab:rdsnettraits} provides summaries of these traits for all subjects in the study.  The gender variable contains only one missing value.  One subject reported being transgender, neither male nor female; we did not alter this value, so tests of equality $Z_i=Z_j$ for the ``gender'' trait are always false for this person. Crack and homelessness data were less complete for many unrecruited nominees.

\begin{table}
  \centering
  \begin{tabular}{lccccc}
    \hline
Trait    &&   0  &    1 & Other & Missing \\
\hline
Gender   && 720  & 1904 &     1 & 1   \\
Crack    && 1320 & 1173 &       & 133 \\
Homeless && 1222 &  853 &       & 551 \\
\hline
  \end{tabular}
  \caption{Traits used for analysis in the RDS-net study. For gender, 0 indicates female, 1 indicates male, and ``other'' refers to one subject who reported being transgender, neither male nor female. Crack use and homelessness were assigned value 1 for positive status and 0 otherwise.  All recruited subjects' values of these traits are known, but not all trait values were observed for some unrecruited nominated subjects.  Most subjects' gender identity was observed, but substantial numbers of subjects lacked crack use and homelessness data.}
  \label{tab:rdsnettraits}
\end{table}

\subsection{Homophily and preferential recruitment}

We find estimates of homophily and preferential recruitment for each trait under two scenarios.  In the first, we omit vertices in $V_{SU}$ whose trait is missing (all recruited subjects' trait values are fully observed), any edges incident to these vertices, and the corresponding elements of $\Z_{SU}$.  Then the data are given by $(G_{SU},G_R,\mathbf{t},\Z_{SU})$, so we calculate the parameters $h$ and $p$, which are point identified in the absence of missing data.  In the second scenario, we compute bounds for $h$ and $p$ using only the data observed in the RDS portion of the study, $(G_R,\bd_R,\mathbf{t}_R,\Z_R)$.  This is the setting in which most researchers analyze data from RDS studies.  Starting compatible subgraphs were chosen randomly from $\mathcal{C}(G_R,\bd_R,\Z_R)$ by first connecting a random number of pendant edges belonging to recruited vertices, while avoiding parallel edges and self-loops.  Then, any remaining pendant edges were connected to unsampled vertices, whose trait was assigned value 1 with probability 1/2 and zero otherwise. We assessed convergence of the optimization routine from multiple randomly selected starting graphs; convergence was not sensitive to the starting point.

Table \ref{tab:results} shows the results, where point estimates of $h$ and $p$ are given under omission of vertices with missing data.  The intervals $I_h$ and $I_p$ give the identification bounds obtained using the observed RDS data alone.  Point estimates (where vertices with missing traits are excluded) always lie within the identification intervals.  The point estimates for homophily $h$ with respect to gender and crack use are positive, and negative for homelessness, while $p$ is positive for gender and homelessness,but negative for crack use.  Figure \ref{fig:results} shows the identification rectangles obtained by taking the Cartesian product of $I_h$ and $I_p$ in Table \ref{tab:rdsnettraits}.  All identification rectangles cover $(0,0)$.  The point estimates are given by a circle.  The four traces, corresponding to the minima and maxima of $h$ and $p$, show the paths of values $(h,p)$ taken by the optimization algorithm described in Section \ref{sec:simanneal} for finding extrema of these parameters on the set $\mathcal{C}(G_R,\bd_R,\Z_R)$.  

\begin{table}
  \centering
  \begin{tabular}{lcccccc}
    \hline
    && \multicolumn{2}{c}{Homophily} && \multicolumn{2}{c}{Preferential recruitment} \\ \cline{3-4} \cline{6-7} 
    Trait    && $h$      & $I_h$ && $p$      & $I_p$  \\
    \hline 
    Gender   && 0.00878  & (-0.0779,0.0984)  &&  0.00397 & (-0.190,0.534) \\
    Crack    && 0.00283  & (-0.0841,0.0701)  && -0.00051 & (-0.272,0.453) \\
    Homeless && -0.00011  & (-0.0823,0.0729) &&  0.04527 & (-0.221,0.504) \\
    \hline 
  \end{tabular}
  \caption{Homophily $h$ and preferential recruitment $p$ in the RDS-net study for gender, crack use, and homelessness. Point values are given first, then identification intervals $I_h$ and $I_p$ for $h$ and $p$ respectively, obtained by using only the information observed in the RDS portion of the study.  The homophily point estimate is positive for gender and crack use, and negative for homelessness.  Preferential recruitment point estimates are positive for gender and homelessness, and negative for crack.  The identification intervals $I_h$ and $I_p$ contain the point estimates of $h$ and $p$ in all cases. }
  \label{tab:results}
\end{table}

\begin{figure}
  \centering
  \includegraphics[width=\textwidth]{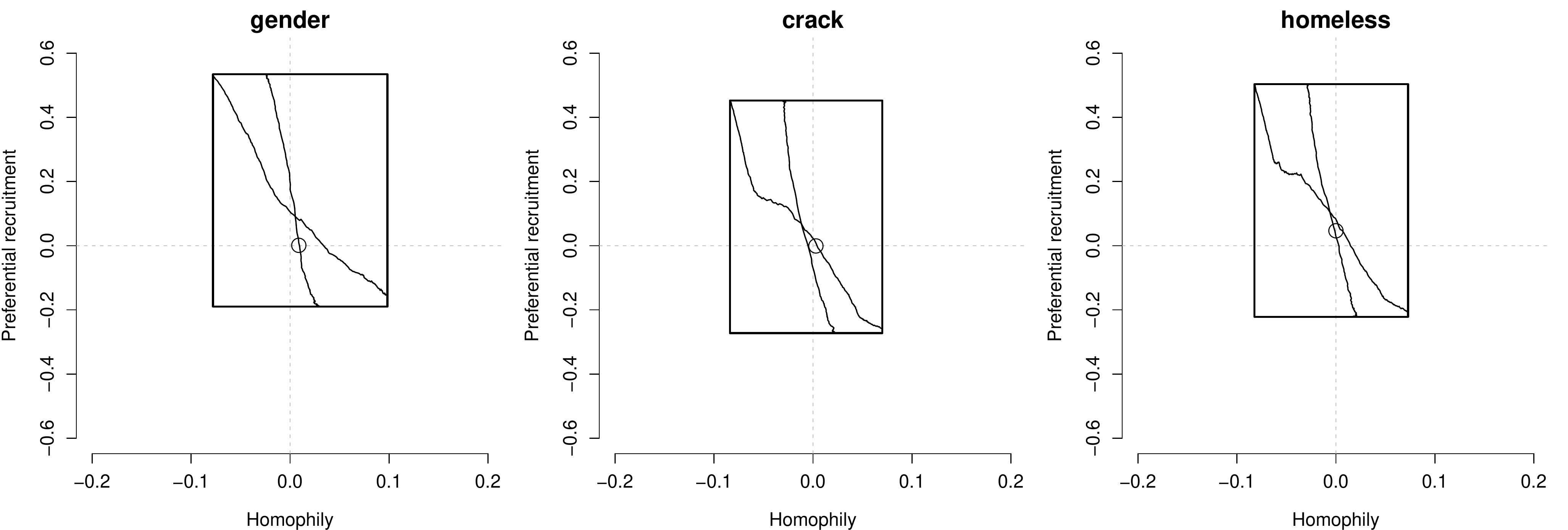}
  \caption{Bounds and true values for homophily and preferential recruitment with respect to gender, crack use, and homelessness in the RDS-net study.  For each trait, the box corresponds to the identification rectangle obtained by the Cartesian product of $I_h$ and $I_p$ in Table \ref{tab:results}.  The traces show the path of $(h,p)$ values visited by the optimization algorithm to find the extrema of $h$ and $p$.  The true value of $(h,p)$, calculated from the known augmented subgraph $G_{SU}$, is given by a circle.}
  \label{fig:results}
\end{figure}


\section{Discussion}

Researchers have devoted a great deal of attention to the influence of homophily and preferential recruitment on population-level estimates from RDS studies \citep{Gile2010Respondent,Tomas2011Effect,liu2012assessment,Rudolph2013Importance,lu2013linked,verdery2015respondent,rocha2015respondent}.  Under the assumptions articulated in this paper, neither of these sources of dependence can be calculated precisely from the observed data alone.  Consequently, there is reason to be skeptical of claims that particular populations surveyed by RDS exhibit homophily \citep[e.g.][]{Gwadz2011Effect,Simpson2014Birds,rudolph2011subpopulations,wejnert2012estimating,Rudolph2014Evaluating}, or that a particular RDS study suffers from preferential recruitment \citep[e.g.]{Iguchi2009Simultaneous,Yamanis2013Empirical,Young2014Spatial}.  It may be the case that homophily and preferential recruitment can induce bias in certain estimators of population quantities, but precisely diagnosing these pathologies is not possible in most RDS studies.

Although it may be disappointing that subgraph homophily and preferential recruitment are usually not point identified in RDS studies, we can still draw credible inferences about these parameters.  For example, the identification rectangles for gender, crack use, and homelessness in the RDS-net study are considerably smaller in area than the outcome space $[-1,1]^2$.  Under some circumstances, it may be possible to deduce that homophily or preferential recruitment is strictly positive or negative in the augmented subgraph, even without exact knowledge of that subgraph.  Informally, the more $G_{SU}$ resembles $G_R$, the narrower the identification region for $(h,p)$ will be.

The identification bounds proposed in this paper depend on three fundamental assumptions: the network exists, subjects are recruited across its edges, and nobody can be recruited more than once.  When these assumptions are met, the structure of data from RDS studies allows computation of credible bounds for $h$ and $p$.  However, the observed data also impose strict limits on the precision of these estimates: the bounds are often wide in practice.  Stronger assumptions about the topology of the network and dynamics of the recruitment process may yield narrower bounds, or point identification, at the cost of decreased credibility \citep{Manski2003Partial}.

In some circumstances, Assumptions \ref{assump:net}-\ref{assump:nomult} may not be reasonable.  For example, \citet{Scott2008They} describes a study in which subjects reported selling their coupons instead of recruiting among their social contacts.  In post-recruitment follow-up interviews, 119 subjects in the RDS-net study reported having given a coupon to someone other than the person who redeemed it.  By defining $G_R$ as the coupon redemption graph and $G_{SU}$ as the network of possible coupon redemptions, we have tried to mitigate violations of Assumption \ref{assump:recruit}.  Even if subjects truly recruit only their yet-unrecruited neighbors in an idealized social network, they may misreport their degrees in the network \citep{mccarty2001comparing,Salganik2006Variance,bell2007partner}.  Researchers may be able to improve the reliability of degree reports by administering a follow-up questionnaire to subjects about their recruitment behavior \citep{deMello2008Assessment,Yamanis2013Empirical,Gile2015Diagnostics}, or by statistical estimation of degree from enhanced survey instruments \citep{Zheng2006How,McCormick2010How,salganik2011game}. Researchers can assess the sensitivity of the proposed bounds to misreported degree by perturbing reported degrees according to a probability model.  For example, researchers could posit a sampling distribution for subjects' true degrees in $G$, and assess the variability of the identification bounds for $h$ and $p$ by marginalizing (or maximizing) over imputed degrees.


\vspace{0.5cm}

\noindent\textbf{Acknowledgements:} FWC was supported by NIH Grant KL2 TR000140, NIMH grant P30MH062294, the Yale Center for Clinical Investigation, and the Yale Center for Interdisciplinary Research on AIDS.  LZ was supported by a fellowship from the Yale World Scholars Program sponsored by the China Scholarship Council.
RDS-net was funded by NIH/NIDA grant 5R01DA031594-03 to Jianghong Li. 
We acknowledge the Yale University Biomedical High Performance Computing Center for computing support, funded by NIH grants RR19895 and RR029676-01.
We thank 
Gayatri Moorthi, 
Heather Mosher,
Greg Palmer, 
Eduardo Robles, 
Mark Romano, Jason Weiss, and the staff at the Institute for Community Research for their work collecting and preparing the RDS-net data.  
We are grateful to
Jacob Fisher,
Krista Gile, 
Mark Handcock, 
Robert Heimer, 
Edward H. Kaplan,
Lilla Orr,
Jiacheng Wu,
and
Alexei Zelenev
for helpful conversations and comments on the manuscript.

\section*{Appendix 1: Proofs}


\begin{proof}[Proof of Proposition \ref{prop:GS}] 
  Call a recruitment-induced subgraph $G_S=(V_S,E_S)$ compatible with the observed data if $V_S=V_R$, $\{i,j\}\in E_R$ implies $\{i,j\}\in E_S$, and $\sum_{j\neq i} \indicator{\{i,j\}\in E_S} \le d_i$ for each $i\in V_R$.    Call an augmented recruitment-induced subgraph $G_{SU}=(V_{SU},E_{SU})$ compatible with the observed data if conditions 1, 2, 4, and 5 of Definition \ref{defn:compatibility} hold.  Suppose $i\in V_R$ has $d_i^r<d_i$ and $j\in V_R$ has $d_j^r<d_j$.  Let $G_{SU}^1=(V_{SU}^1,E_{SU}^1)$ be any compatible subgraph with $\{i,u\}\in E_{SU}$, $\{j,u\}\in E_{SU}$, where $u\notin V_R$ is an unsampled vertex.  Let $G_{SU}^2=(V_{SU}^2,E_{SU}^2)$ be identical to $G_{SU}^1$ except that $\{i,j\}\in E_{SU}^2$, so neither $i$ nor $j$ is connected to $u$.  If in the resulting subgraph $u$ has no neighbors in $V_R$, i.e. there does not exist $k\in V_R$ such that $\{k,u\}\in E_{SU}^2$, then remove $u$ from $V_{SU}^2$.  
  Let $G_S$ be the recruitment-induced subgraph obtained by removing any unsampled vertices (and edges connected to them) from $G_{SU}$.  Clearly $G_S$ is compatible with the observed data, and $G_{SU}^2$ is compatible under conditions 1,2,4, and 5 of Definition \ref{defn:compatibility}. Since there exist at least two compatible recruitment-induced subgraphs and at least two compatible augmented recruitment-induced subgraphs, neither $G_S$ nor $G_{SU}$ are uniquely identified.
\end{proof}


\begin{proof}[Proof of Proposition \ref{prop:homophilyid}]
  Suppose the observed RDS data are $G_R$, $\Z_R$, $\T_R$, $\bd_R$, and there exist distinct $i\in V_R$ and $j\in V_R$ with $d_i^r<d_i$, $d_j^r<d_j$, and $Z_i\neq Z_j$.  Without loss of generality, suppose $Z_i=0$ and $Z_j=1$.  We will exhibit $(G_{SU}^1,\Z_{SU}^1)\in \mathcal{C}(G_R,\bd_R,\Z_R)$ and $(G_{SU}^2,\Z_{SU}^2)\in\mathcal{C}(G_R,\bd_R,\Z_R)$ such that $h(G_{SU}^1,G_R,\Z_{SU}^1) \neq h(G_{SU}^2,G_R,\Z_{SU}^2)$.  Let $(G_{SU}^1,\Z_{SU}^1)$ be any compatible subgraph and trait set with the property that $\{i,u_1\}\in E_{SU}$ and $\{j,u_2\}\in E_{SU}$, where $u_1$ and $u_2$ are unsampled vertices with $Z_{u_1}=0$ and $Z_{u_2}=1$.  Let $(G_{SU}^2,\Z_{SU}^2)$ be identical to $(G_{SU}^1,\Z_{SU}^1)$ except that the edges connecting $i$ and $j$ to $u_1$ and $u_2$ are swapped: $\{i,u_1\}\notin E_{SU}$,$\{j,u_2\}\notin E_{SU}$, and $\{i,u_2\}\in E_{SU}^2$ and $\{j,u_1\}\in E_{SU}^2$.  Clearly we have $(G_{SU}^2,\Z_{SU}^2) \in \mathcal{C}(G_R,\bd_R,\Z_R)$.  Note that $\bar{A}_{SU}$, $\bar{Z}_{SU}$, $\sigma(\A_{SU})$, and $\sigma(\Z_{SU})$ are the same under both $(G_{SU}^1,\Z_{SU}^1)$ and $(G_{SU}^2,\Z_{SU}^2)$.  Let $h_1=h(G_{SU}^1,G_R,\Z_{SU}^1)$ and $h_2=h(G_{SU}^2,G_R,\Z_{SU}^2)$ be the calculated values of homophily.  We compute the difference 
\begin{equation}
 \begin{split} 
 h_1 - h_2 &= \frac{ 2(1-\bar{A})(1-\bar{Z}) + 2(0-\bar{A})(0-\bar{Z}) - 2(1-\bar{A})(0-\bar{Z}) -2(0-\bar{A})(1-\bar{Z}) }{\left(\binom{|V_R|}{2} + |V_R||U| \right)\sigma(\A) \sigma(\Z)} \\
           &= \frac{2}{ \left(\binom{|V_R|}{2} + |V_R||U| \right) \sigma(\A) \sigma(\Z)}  > 0 .
 \end{split}
\end{equation}
Since this quantity is always non-zero, homophily is not point identified.
\end{proof}


\begin{proof}[Proof of Proposition \ref{prop:prefrecid}]
  Again suppose the observed RDS data are $G_R$, $\Z_R$, $\T_R$, $\bd_R$, and there exists $i\in V_R$ such that $d_i^r<d_i$ and $i$ recruited $k\in V_R$, $k\neq i$.  Without loss of generality, suppose $Z_i=1$.  Let $(G_{SU}^1,\Z_{SU}^1)$ be any compatible subgraph and trait set with the property that one edge connects $i$ to an unsampled vertex $u$, where $u$ has no other neighbors in $V_R$, and $Z_u=1$.  Let $(G_{SU}^2,\Z_{SU}^2)$ be identical to $(G_{SU}^1,\Z_{SU}^1)$ except that $Z_u=0$.  Recall that $|S_{r_j}(t_j)|$ is the number of susceptible vertices connected to the recruiter $r_j$ of $j$ in $G_{SU}^1$ or $G_{SU}^2$.  The difference is 
\[ p(G_{SU}^1,G_R,\mathbf{t}_R,\Z_{SU}^1) - p(G_{SU}^2,G_R,\mathbf{t}_R,\Z_{SU}^2) = \frac{1}{n-|M|} \frac{1}{|S_i(t_k)|} > 0. \]
Therefore $p$ is not point identified.
\end{proof}


\begin{proof}[Proof of Proposition \ref{prop:simanneal}]

Let $J(h,p) = 1/(1+\epsilon+ h)$ for $0<\epsilon<1$ and let $\mathcal{M}$ be the set of $(G_{SU},\Z_{SU})$ that achieve the global maximum of $J$ on $\mathcal{C}(G_R,\bd_R,\Z_R)$.  Let the cooling schedule be given by 
\[ T_t = \frac{1}{\epsilon\log(t)} . \]
Following \citet{Hajek1988Cooling}, we say that a state $(G_{SU},\Z_{SU})\in \mathcal{C}(G_R,\bd_R,\Z_R)$ communicates with $\mathcal{M}$ at depth $D$ if there exists a path in $\mathcal{C}(G_R,\bd_R,\Z_R)$ that starts at $(G_{SU},\Z_{SU})$ and ends at an element of $\mathcal{M}$ such that the least value of $J$ along the path is $J\big(h(G_{SU},G_R,\Z_{SU}),p(G_{SU},G_R,\mathbf{t}_R,\Z_{SU})\big)-D$.  Let $D^*$ be the smallest number such that every $(G_{SU},\Z_{SU})\in \mathcal{C}(G_R,\bd_R,\Z_R)$ communicates with $\mathcal{M}$ at depth $D^*$.  Theorem 1 of \citet{Hajek1988Cooling} states that if $T_t \to 0$ and $\sum_{t=1}^\infty \exp[ -D^*/T_t ]$ diverges, then the sequence $(G_{SU},\Z_{SU})_t$ converges in probability to an element of $\mathcal{M}$.  

First, note that since $J(h,p)>0$ for all $h$, $D^*$ is bounded above by the maximum of $J$ on $\mathcal{C}(G_R,\bd_R,\Z_R)$, and so   
  \begin{equation}
    \begin{split}
      D^* &\le \max_{(G_{SU},\Z_{SU})\in \mathcal{C}(G_R,\bd_R,\Z_R)} J\big(h(G_{SU},G_R,\Z_{SU}),p(G_{SU},G_R,\mathbf{t}_R,\Z_{SU})\big) \\
      &\le \max_{(h,p)\in [-1,1]^2} J(h,p) \\
      &= \max_{(h,p)\in [-1,1]^2} 1/(1+\epsilon+h) \\
      &= 1/\epsilon .
  \end{split}
  \end{equation}
  Now examining the divergence criterion, 
  \begin{equation}
    \begin{split}
      \sum_{t=1}^\infty \exp[ -D^*/T_t ] &= \sum_{t=1}^\infty \exp\left[ - D^* \epsilon \log(t) \right] \\
                                         &= \sum_{t=1}^\infty \frac{1}{t^{D^*\epsilon}} \\
                                         &\ge \sum_{t=1}^\infty \frac{1}{t} = \infty 
    \end{split}
  \end{equation}
  where the inequality is a consequence of $D^*\epsilon \le 1$.  Therefore $\lim_{t\to\infty} \Pr\big( (G_{SU},\Z_{SU})_t \in \mathcal{M} \big) = 1$, as claimed.
\end{proof}


\section*{Appendix 2: Sampling $(G_{SU},\Z_{SU})$}

Suppose $(G_{SU},\Z_{SU})\in \mathcal{C}(G_R,\bd_R,\Z_R)$ is a compatible augmented subgraph and trait set, and we wish to propose another compatible pair $(G_{SU}^*,\Z_{SU}^*)\in \mathcal{C}(G_R,\bd_R,\Z_R)$. We outline two proposal mechanisms. The first removes or adds an edge in $G_{SU}$.  If necessary, a new unsampled vertex $u$ is invented, and assigned a trait value $Z_u$.  Let $U=\{u\in V_{SU}:\ u\notin V_R\}$ be the set of unsampled vertices.  
Furthermore, let $U_{-k}=\{u\in V_{SU}\setminus V_R:\ \{k,u\}\notin E_{SU}\}$ be the set of unsampled vertices in $U$ that are \emph{not} connected to $k\in V_R$.
\begin{algorithmic}[1]
\STATE Let $G_{SU}^*=G_{SU}$ and $\Z_{SU}^*=\Z_{SU}$ 
\STATE Randomly choose $i\in V_R$ and $j\in V_{SU}$ with $i\neq j$.
\IF{ $\{i,j\}\in E_{SU}$ and $\{i,j\}\notin E_R$ } 
  \STATE Remove $\{i,j\}$ from $E_{SU}^*$
  \STATE $B \sim \text{Bernoulli}(1/2)$
  \IF{$B<0.5$ and $U_{-i}\neq\emptyset$}
    \STATE Randomly choose $u\in U_{-i}$
  \ELSE 
    \STATE Add a new vertex $u$ to $V_{SU}^*$
    \STATE Randomly choose a trait $Z_u^*\in\{0,1\}$ 
  \ENDIF
  \STATE Add $\{i,u\}$ to $E_{SU}^*$
  \IF{ $j\in V_R$ } 
    \STATE $B \sim \text{Bernoulli}(1/2)$
    \IF{$B<0.5$ and $U_{-j}\neq\emptyset$}
      \STATE Randomly choose $u\in U_{-j}$
    \ELSE 
      \STATE Add a new vertex $u$ to $V_{SU}^*$
      \STATE Randomly choose a trait $Z_u^*\in\{0,1\}$ 
    \ENDIF
  \ENDIF
  \STATE Add $\{j,u\}$ to $E_{SU}^*$
  \ELSIF{$\{i,j\}\notin E_{SU}$ and $\exists u_1,u_2\in U:\ \{i,u_1\}\in E_{SU}$ and $\{j,u_2\}\in E_{SU}$} 
    \STATE Remove $\{i,u_1\}$ and $\{j,u_2\}$ from $E_{SU}^*$
    \STATE Add $\{i,j\}$ to $E_{SU}^*$  
\ENDIF
\STATE Remove any isolated vertices from $V_{SU}^*$
\end{algorithmic}
The space $\mathcal{C}(G_R,\bd_R,\Z_R)$ is connected via proposals of this type \citep[see][for explanation]{Crawford2014Graphical}.  The second proposal mechanism accelerates exploration of $\mathcal{C}(G_R,\bd_R,\Z_R)$ by switching the trait of an unsampled vertex:  
\begin{algorithmic}[1]
  \STATE Choose $u\in \{u\in V_{SU}:\ u\notin V_R \}$.
  \STATE Set $Z_u^* = 1-Z_u$.
\end{algorithmic}
Together, these proposal mechanisms result in a well-mixing sequence $(G_{SU},\Z_{SU})_t$.


\bibliographystyle{spbasic}
\bibliography{rds}

\end{document}